\documentclass[amsmath,secnumarabic,floatfix,amssymb,nofootinbib,nobibnotes,letterpaper,11pt,tightenlines,showkeys]{revtex4}

\usepackage{times}

\usepackage{geometry}
\usepackage{amssymb}
\usepackage{latexsym, amsmath, amscd,amsthm}
\usepackage{graphicx}
\usepackage[percent]{overpic}
\usepackage{units}
\usepackage{hyperref}
\PassOptionsToPackage{caption=false}{subfig}
\usepackage[lofdepth]{subfig}
\usepackage{algorithm}
\usepackage{algorithmicx}
\usepackage{algpseudocode}

\newtheorem{theorem}{Theorem}

\newtheorem{lemma}[theorem]{Lemma}
\newtheorem{proposition}[theorem]{Proposition}

\theoremstyle{definition}



\newcommand{\R}{\mathbb{R}}


\newcommand{\Pol}{\operatorname{Pol}}

\newcommand{\cross}{\times}

\newcommand{\Vol}{\operatorname{Vol}}

\newcommand{\dy}{\, \mathrm{d}y}
\newcommand{\dx}{\, \mathrm{d}x}
\newcommand{\sinc}{\operatorname{sinc}}

\newcommand{\SO}{\operatorname{SO}}

\newcommand{\PolHat}{\widehat{\Pol}(n)}

\bibliographystyle{plain}

\setlength{\parskip}{5pt}

\let\mgp=\marginpar \marginparwidth18mm \marginparsep1mm
\def\marginpar#1{\mgp{\raggedright\tiny #1}}

\let\lbl=\label
\def\label#1{\lbl{#1}\ifinner\else\marginpar{\ref{#1} #1}\ignorespaces\fi}

\bibliographystyle{plain}

\begin{document}
\title[]{A fast direct sampling algorithm for equilateral closed polygons}
\author{Jason Cantarella}
\altaffiliation{Department of Mathematics, University of Georgia, Athens GA}
\noaffiliation
\author{Bertrand Duplantier}
\altaffiliation{Institut de Physique Th\'eorique, CEA/Saclay, Gif-sur-Yvette Cedex, France}
\noaffiliation
\author{Clayton Shonkwiler}
\altaffiliation{Department of Mathematics, Colorado State University, Fort Collins CO}
\noaffiliation
\author{Erica Uehara}
\altaffiliation{Department of Physics, Ochanomizu University, Tokyo, Japan}
\noaffiliation

\begin{abstract}
Sampling equilateral closed polygons is of interest in the statistical study of ring polymers. Over the past 30 years, previous authors have proposed a variety of simple Markov chain algorithms (but have not been able to show that they converge to the correct probability distribution) and complicated direct samplers (which require extended-precision arithmetic to evaluate numerically unstable polynomials). We present a simple direct sampler which is fast and numerically stable, and analyze its runtime using a new formula for the volume of equilateral polygon space as a Dirichlet-type integral.
\end{abstract}

\keywords{closed random walk; random knot; random polygon; crankshaft algorithm, polygonal fold algorithm}

\date{\today}

\maketitle

\section{Introduction}
%
It is easy to sample random open polygons with equal edgelengths in $\R^3$: just pick steps independently and uniformly on the sphere. This model has been analyzed by authors since Lord Rayleigh~\cite{Rayleigh:1919do}. It is much harder to sample random \emph{closed} polygons, since the closure condition imposes subtle global correlations between edge directions. This problem is of interest in the statistical physics of polymers, as the closed equilateral polygon is a model for a ring polymer under ``$\theta$-conditions" (see the excellent survey~\cite{Orlandini:2007kn} for many applications of these kinds of models in physics and biology). A wide variety of sampling algorithms for random closed polygons have been proposed in the literature~\cite{Cantarella:2013wl, MR95g:57016,Vologodskii:1979ik,Klenin:1988dt,Anonymous:2010p2603,Moore:2004ds,Varela:2009cda}. Of these, only two have been shown rigorously to converge to the correct distribution on polygon space: the ``toric symplectic Markov chain Monte Carlo (TSMCMC)" algorithm of two of the present authors~\cite{Cantarella:2013wl} and the direct ``sinc integral method'' of Moore and Grosberg~\cite{Moore:2005fh}, discovered independently by Diao, Ernst, Montemayor, and Ziegler \cite{Diao:2011ie,Diao:wt,Diao:2012dza}. 

The purpose of this paper is to propose a direct sampling algorithm which improves on the sinc integral method. To describe it, we need to introduce a system of coordinates for polygon space. We start by fixing some notation. For any $n$-gon in $\R^3$, we let $v_1, \dots, v_n$ be the vertices of the polygon and $e_1, \dots, e_n$ be the edges (so $e_i = v_{i+1} - v_i$, where indices will always be interpreted cyclically). We will assume that $|e_i| = 1$, so our polygons are equilateral. The space of such $n$-gons with $v_1$ at the origin is a compact probability space $\Pol(n)$. The standard measure on $\Pol(n)$ is constructed by sampling $e_i$ uniformly and independently from the standard measure on the unit sphere, conditioned on the hypothesis that the polygon closes -- that is, that $\sum e_i = 0$. Since we are interested in the \emph{shapes} of polygons, we consider the moduli space $\PolHat = \Pol(n)/\SO(3)$ of polygons up to rotation; the measure on $\PolHat$ is simply the pushforward of the above conditional measure by the quotient map.

Joining all vertices of an $n$-gon to the first vertex, as in Figure~\ref{fig:assembly} (far left) creates a collection of $n-3$ triangles.  It is helpful to think of the $n$-gon as the boundary of the piecewise linear surface composed of these triangles. The embedding of the surface is determined by the lengths $d_i$ of the diagonals joining $v_1$ and $v_{i+2}$ and the dihedral angles $\theta_i$ between the triangles meeting at each diagonal. We can reconstruct the surface (and hence, the polygon) up to a rotation in space from this data as in Figure~\ref{fig:assembly}. This means that diagonal lengths and dihedral angles form a system of coordinates for $\PolHat$. They turn out to be ``action-angle'' coordinates for this space (in the sense of symplectic geometry, see~\cite{Cantarella:2013wl}). The angle coordinates are chosen independently, but the action coordinates (diagonal lengths) are not; since they are sidelengths of various triangles, they obey a system of triangle inequalities defining a convex polytope $\mathcal{P}_n \subset \R^{n-3}$.

\begin{figure}[t]
\begin{overpic}[width=1.3in]{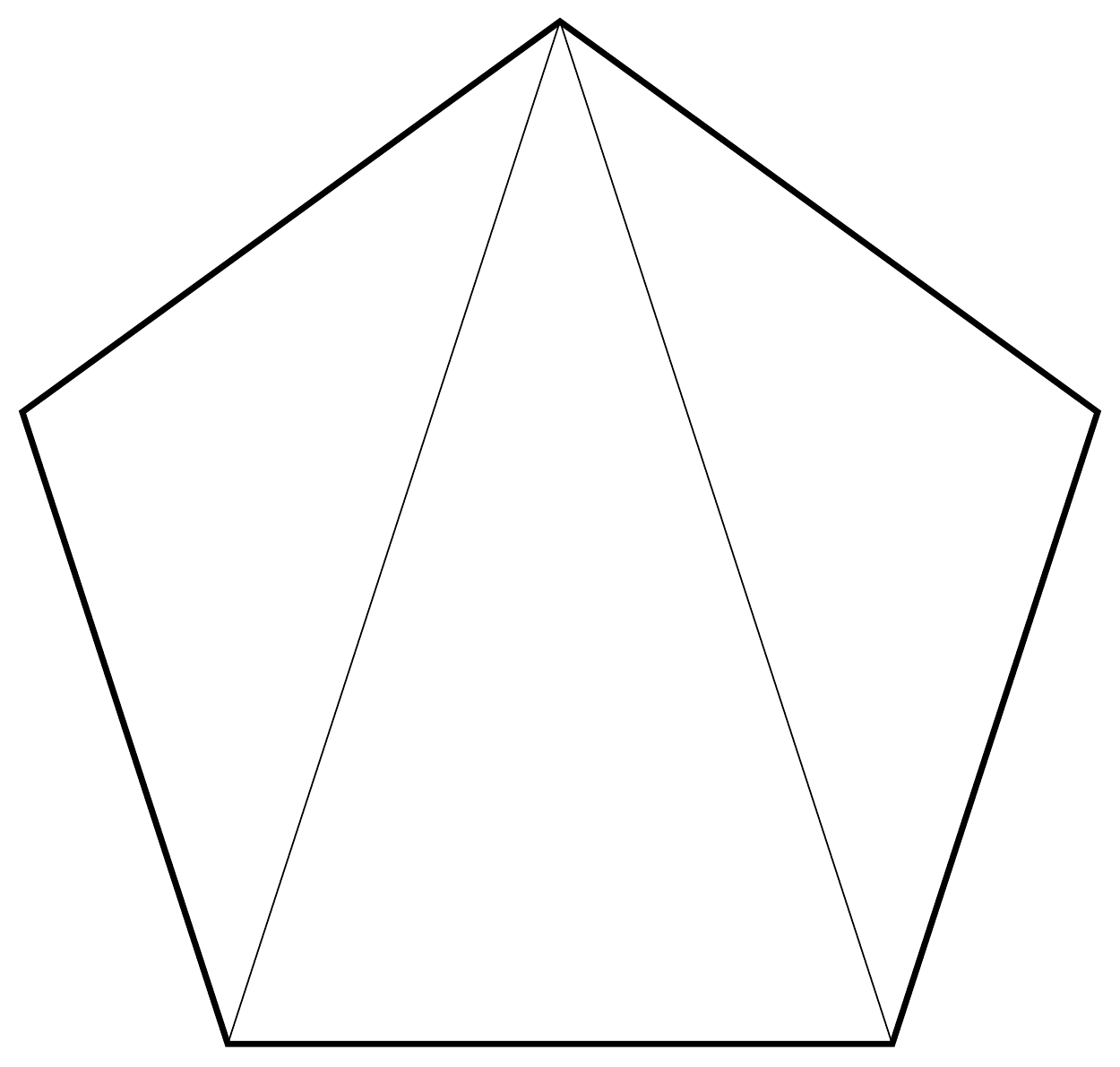}
\put(45,98){$v_1$}
\put(-9,57){$v_2$}
\put(17,-5){$v_3$}
\put(77,-5){$v_4$}
\put(100,57){$v_5$}
\end{overpic}
\hfill
\begin{overpic}[width=1.85in]{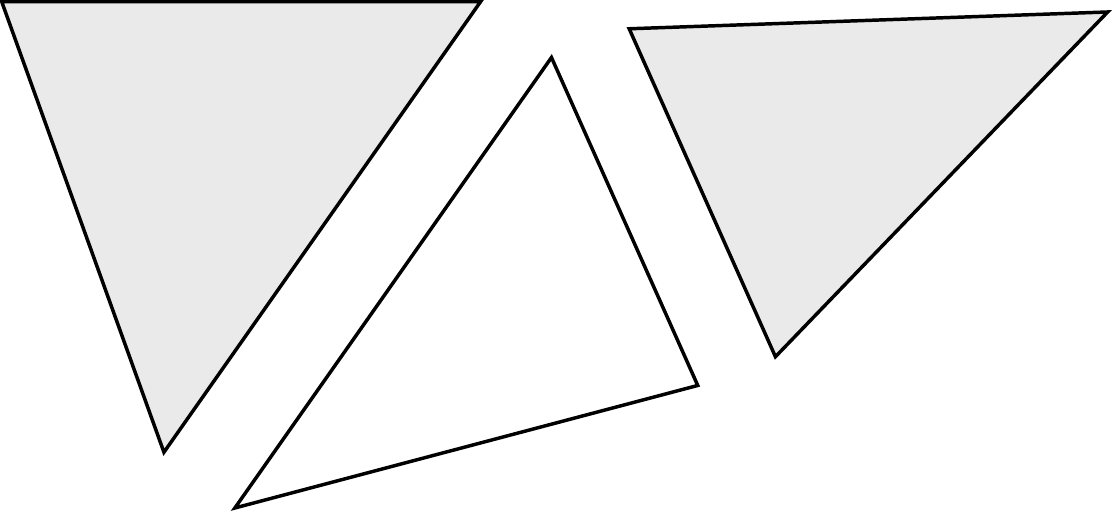}
\put(29.5,22.5){$d_1$}
\put(57.5,24){$d_2$}
\end{overpic}
\hfill
\begin{overpic}[width=1.1in]{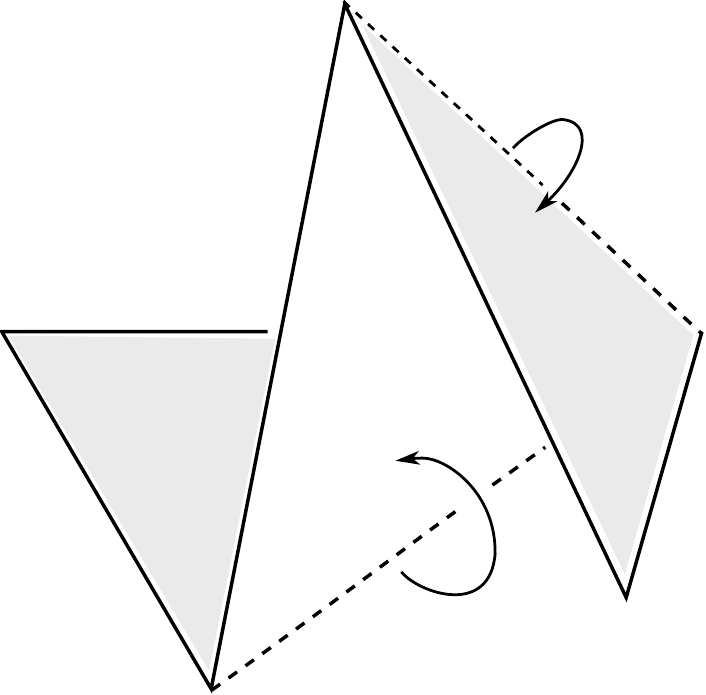}
\put(72,15){$\theta_1$}
\put(82,84){$\theta_2$}
\end{overpic}
\hfill
\begin{overpic}[width=1.1in]{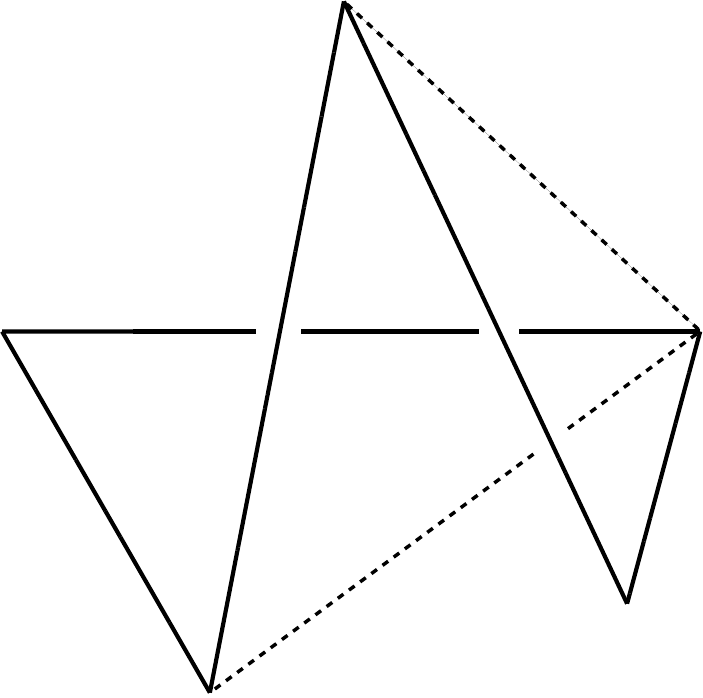}
\end{overpic}
\caption{This figure shows how to construct an equilateral pentagon from diagonals and dihedrals. On the far left, we see the fan triangulation of an (abstract) pentagon. Given diagonal lengths $d_1$ and $d_2$ of the pentagon which obey the triangle inequalities, we build the three triangles in the triangulation from their side lengths, as in the picture at middle left. Given dihedral angles $\theta_1$ and $\theta_2$, we can embed these triangles as a surface in space, as in the picture at middle right. The final space polygon, which is the boundary of this triangulated surface, appears far right.}
\label{fig:assembly} 
\end{figure}

The sinc integral method constructs a polygon in action-angle coordinates stepwise by sampling each successive diagonal length $d_i$ from a piecewise polynomial pdf determined by the previous diagonal lengths and the number of diagonals remaining to be sampled (this pdf is determined by an integral of powers of the $\sinc$ function). The corresponding dihedral angle is sampled uniformly. This algorithm produces perfect samples, but is slow and difficult to implement. For instance, the text file giving the coefficients of the polynomials needed to sample a random closed 95-gon is over 25 megabytes in length. These polynomials have very large coefficients which almost cancel, requiring the use of extended precision arithmetic to evaluate stably\footnote{Hughes discusses these methods in Section 2.5.4 of his book on random walks~\cite{hughes1995random}, attributing the formula rederived by Moore and Grosberg~\cite{Moore:2005fh} to a 1946 paper of Treloar~\cite{TF9464200077}. The problems with evaluating these polynomials accurately were known by the 1970's, when Barakat~\cite{0301-0015-6-6-008} derived an alternate expression for this probability density based on Fourier transform methods.}.

Our improved method also constructs polygons in action-angle coordinates, but we construct all the diagonal lengths simultaneously. In previous work~\cite{Cantarella:2013wl}, two of us (Cantarella and Shonkwiler) used symplectic geometry to show that the sinc integral pdfs are basically marginals of the uniform distribution on the polytope $\mathcal{P}_n$. In fact, we show that it suffices to sample diagonal lengths and dihedral angles independently and uniformly from $\mathcal{P}_n \cross T^{n-3}$, where $T^{n-3} = (S^1)^{n-3}$ is the $(n-3)$-dimensional torus (see Theorem~\ref{prop:polygon sampling}), so the real problem is sampling $\mathcal{P}_n$.

In this paper, we show that a linear transformation of the polytope $\mathcal{P}_n$ is an unexpectedly large subset of the $(n-3)$-dimensional hypercube: the relative volume fraction is $\sim n^{-3/2}$ (Theorem~\ref{thm:volume bound}). Since samples of points in the hypercube take time $\sim n$ to construct, this means that we can use rejection sampling to find points in the polytope in (expected) $\Theta(n^{5/2})$ time. We call our method the \emph{Action-Angle Method}.

A test C implementation is quite efficient, generating random 100 edge polygons at 2400 samples per second, 1000 edge polygons at 30 samples per second, and 3000 edge polygons at 2 samples per second on a laptop. We redo some previous experiments of Alvarado et al.~\cite{Anonymous:2010p2603} to exercise our implementation and make the curious observation that the log-log plot of the rank statistic for knot types seems remarkably linear.



\section{Action-angle coordinates and the polytope~$\mathcal{P}_n$}
%
We start by describing the polytope $\mathcal{P}_n$ of diagonal lengths more explicitly. Each $d_i$ is a side length of two triangles, and the corresponding triangle inequalities are 
\begin{equation}
0 \leq d_1 \leq 2 
\qquad 
\begin{matrix} 
1 \leq d_i + d_{i+1} \\
-1 \leq d_i - d_{i+1} \leq 1 
\end{matrix}
\qquad
0 \leq d_{n-3} \leq 2 
\label{eq:fan polytope}
\end{equation} 
If $\mathcal{P}_n \subset \R^{n-3}$ is the polytope defined by these inequalities, the action-angle coordinates are defined on $\mathcal{P}_n \cross T^{n-3}$. In \cite{Cantarella:2013wl}, two of us (Cantarella and Shonkwiler) proved that

\begin{theorem}\label{prop:polygon sampling}
If we take the uniform measure on $\mathcal{P}_n \cross T^{n-3}$, the reconstruction map ${\alpha: \mathcal{P}_n \cross T^{n-3} \rightarrow \PolHat}$ defining action-angle coordinates is measure-preserving.
\end{theorem}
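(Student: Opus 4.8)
The plan is to identify the reconstruction map $\alpha$ as a composition of the symplectic ``bending flow'' action-angle coordinate chart on $\PolHat$ and to invoke the Liouville/Guillemin--Sternberg correspondence between the Liouville measure of a toric symplectic manifold and the uniform (Lebesgue) measure on its moment polytope. Concretely, I would first recall that $\Pol(n)$, with its natural measure coming from conditioning product spherical measure on the closure constraint $\sum e_i = 0$, carries a symplectic structure for which the measure is (up to normalization) the Liouville volume $\omega^{n-3}/(n-3)!$: this is the symplectic reduction of a product of coadjoint orbits $(S^2)^n$ by the diagonal $\SO(3)$ action, and the quotient by the residual $\SO(3)$ gives $\PolHat$ as a symplectic orbifold whose Liouville measure pushes forward correctly. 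I would cite the construction in~\cite{Cantarella:2013wl} for these facts rather than re-derive them.

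Next I would describe the functions $d_1, \dots, d_{n-3}$ (diagonal lengths of the fan triangulation) as moment maps of Hamiltonian circle actions — the bending flows of Kapovich--Millson — which mutually Poisson-commute, so that $(d_1, \dots, d_{n-3})$ is the moment map of a Hamiltonian $T^{n-3}$ action on the open dense subset of $\PolHat$ where all the diagonals are nondegenerate (i.e., strictly satisfy the triangle inequalities~\eqref{eq:fan polytope}). The image of this moment map is exactly the polytope $\mathcal{P}_n$, and the conjugate angles $\theta_1, \dots, \theta_{n-3}$ are the flow parameters, giving an explicit symplectomorphism from $\intr(\mathcal{P}_n) \cross T^{n-3}$ (with its standard symplectic form $\sum \mathrm{d}d_i \wedge \mathrm{d}\theta_i$) onto this open dense set; this symplectomorphism is precisely $\alpha$. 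The reconstruction procedure of Figure~\ref{fig:assembly} — build the triangles from the $d_i$, then open them out by dihedral angles $\theta_i$ — is the geometric description of integrating these bending flows, so one checks that $\alpha$ as defined combinatorially agrees with the symplectic chart.

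Once $\alpha$ is a symplectomorphism between $\intr(\mathcal{P}_n)\cross T^{n-3}$ and a full-measure open subset of $\PolHat$, the conclusion is immediate: a symplectomorphism preserves Liouville volume, the Liouville volume of $\intr(\mathcal{P}_n)\cross T^{n-3}$ with $\omega = \sum \mathrm{d}d_i\wedge\mathrm{d}\theta_i$ is $\mathrm{d}d_1\cdots\mathrm{d}d_{n-3}\,\mathrm{d}\theta_1\cdots\mathrm{d}\theta_{n-3}$ up to a constant (hence the product of uniform measure on $\mathcal{P}_n$ and uniform measure on $T^{n-3}$, after normalizing), and the complement of the open set is measure zero on both sides. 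The only genuinely delicate point — and the one I expect to be the main obstacle — is checking that the combinatorially-defined reconstruction map really coincides with the symplectic action-angle chart (equivalently, that the $d_i$ generate honest \emph{circle} actions with the right period normalization, so that no Jacobian factor other than a global constant appears), and handling the measure-zero locus where diagonals degenerate or where the $\SO(3)$ action is not free, so that the pushforward statement holds on the nose rather than just on a dense open set. This is exactly the content established in~\cite{Cantarella:2013wl}, so in this paper the proof is a citation together with the remark that degenerate configurations form a set of measure zero.
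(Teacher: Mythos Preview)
Your proposal is correct and matches the paper's approach: the paper does not give a self-contained proof but summarizes exactly the argument you outline --- $\PolHat$ is (almost) toric symplectic via the Kapovich--Millson bending flows, the diagonal lengths are the moment map, and the Duistermaat--Heckman theorem (equivalently, Liouville-volume preservation under the action-angle symplectomorphism) gives the uniform measure on $\mathcal{P}_n\times T^{n-3}$ --- citing \cite{Cantarella:2013wl} for the details. Your identification of the delicate points (period normalization of the circle actions and the measure-zero degenerate locus) is also on target; the paper simply defers these to the cited reference.
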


Put another way, just as one can sample the sphere uniformly by choosing the cylindrical coordinates $z$ uniformly on $[-1,1]$ and $\theta$ uniformly on $[0,2\pi)$, one can sample polygon space uniformly by choosing $\vec{d}$ uniformly from $\mathcal{P}_n$ and $\vec{\theta}$ uniformly from $T^{n-3}$. It is not important to understand the proof of the above theorem for the rest of this paper. However, we can summarize by saying that previous authors had shown that equilateral polygon space is (almost) a special kind of manifold called a toric symplectic manifold~\cite{Kapovich:1996p2605}. Further, the coordinates above have a special relationship to that structure: rotation around a diagonal is a symmetry of the space (parametrized by the dihedral angle) and the length of the diagonal is the corresponding conserved quantity (or action). The Duistermaat-Heckman theorem~\cite{Duistermaat:1982hq} then implies that the pushforward of the symplectic volume of the space to the ``moment polytope'' $\mathcal{P}_n$ must be the uniform measure on the polytope. 
All this is explained in some detail in \cite{Cantarella:2013wl}. The results are more general than we are using here -- for instance, we could fix the edgelengths to be different numbers or triangulate the polygon using a different collection of diagonals and the theory would still work.


\section{A change of coordinates and a volume estimate}
%
We must now sample $\mathcal{P}_n$. We first make a linear transformation of the variables. We extend the definition of the $d_i$ to include $d_0 = |v_2 - v_1| = 1$ and $d_{n-2} = |v_n - v_1| = 1$. Looking at the system of inequalities~\eqref{eq:fan polytope}, it is natural to rewrite this system in terms of new variables $s_1, \dots, s_{n-2}$ where $s_i = d_{i} - d_{i-1}$. Since $\sum s_i = d_{n-2} - d_0 = 0$, the last variable $s_{n-2}$ is determined uniquely by the previous $n-3$ variables. Translating the system~\eqref{eq:fan polytope} from $d$ to $s$ coordinates yields the system of inequalities
\begin{equation}
	    \underbrace{-1 \leq s_i \leq 1}_{|d_i - d_{i+1}| \leq 1}, \quad  -1 \leq \sum_{i=1}^{n-3} s_i \leq 1  \quad \text{ and }\quad \underbrace{\sum_{j=1}^{i} s_j + \sum_{j=1}^{i+1} s_{j} \geq -1}_{d_i + d_{i+1} \geq 1}.
\label{eq:chordlength excursions}
\end{equation}

\begin{proposition}\label{prop:moment polytope volume}
	The $n-3$-dimensional moment polytope $\mathcal{P}_n$ is the image under a volume-preserving linear transformation of the $(n-3)$-dimensional polytope $\mathcal{C}_n \subset [-1,1]^{n-3}$ defined by the inequalities~\eqref{eq:chordlength excursions}. Moreover, the volume of $\mathcal{C}_n$ (and $\mathcal{P}_n$) is
\begin{equation} 
-\frac{1}{2 (n-3)!}\sum
   _{k=0}^{\left\lfloor
   \nicefrac{n}{2}\right\rfloor }
   (-1)^k \binom{n}{k} (n - 2k)^{n-3}
 = \frac{2^{n-1}}{2\pi}\int_{-\infty}^\infty \frac{\sin^n x}{x^{n-2}} \dx.
\label{eq:volume}
\end{equation}
\end{proposition}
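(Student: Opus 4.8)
\emph{The linear change of coordinates.} This part is routine. Set $d_0 := 1 =: d_{n-2}$ as in the text, so $s_i = d_i - d_{i-1}$ and $d_i = 1 + \sum_{j=1}^{i} s_j$ for $1 \le i \le n-2$. The map $(d_1,\dots,d_{n-3}) \mapsto (s_1,\dots,s_{n-3})$ is affine with linear part the lower-triangular matrix carrying $1$'s on the diagonal and $-1$'s on the first subdiagonal; its determinant is $1$, so it is volume-preserving. Substituting $d_i = 1 + \sum_{j\le i}s_j$ into \eqref{eq:fan polytope}, the constraint $|d_i - d_{i+1}| \le 1$ becomes $|s_{i+1}| \le 1$, the bounds $0 \le d_1 \le 2$ and $0 \le d_{n-3} \le 2$ become $-1 \le s_1 \le 1$ and $-1 \le \sum_{i=1}^{n-3} s_i \le 1$, and $d_i + d_{i+1} \ge 1$ becomes $\sum_{j=1}^{i} s_j + \sum_{j=1}^{i+1} s_j \ge -1$; this is precisely the system \eqref{eq:chordlength excursions} (the bounds $d_i \ge 0$, being consequences of the others, leave no trace). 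In particular $\Vol(\mathcal{C}_n) = \Vol(\mathcal{P}_n)$, and the plan is to compute the latter.

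\emph{The volume as a Dirichlet integral.} Keeping $d_0 = d_{n-2} = 1$, the inequalities \eqref{eq:fan polytope} say exactly that for each $j = 1,\dots,n-2$ the triple $(d_{j-1},1,d_j)$ satisfies the triangle inequality, equivalently $|d_{j-1}-1| \le d_j \le d_{j-1}+1$ (which also forces $d_j \ge 0$). So, writing $\kappa(a,b) := \mathbf{1}\{\,|a-1| \le b \le a+1\,\}$,
\[
  \Vol(\mathcal{P}_n) = \int_{\R^{n-3}} \prod_{j=1}^{n-2} \kappa(d_{j-1},d_j)\, \mathrm{d}d_1 \cdots \mathrm{d}d_{n-3}, \qquad d_0 = d_{n-2} = 1.
\]
The key idea is that $\kappa$ is, up to an explicit weight, the \emph{radial} transition kernel of the standard equilateral random walk in $\R^3$: displacing a point at distance $a>0$ from the origin by a uniformly random unit vector gives its new distance $b$ the density $T(a,b) = \tfrac{b}{2a}\,\kappa(a,b)$ --- because $b^2 = a^2 + 1 + 2a\cos\theta$ with $\cos\theta$ uniform on $[-1,1]$ by Archimedes' theorem --- and $\int_0^\infty T(a,b)\,\mathrm{d}b = 1$. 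Writing $\kappa = \tfrac{2a}{b}\,T$ in the integrand, the interior factors $d_1,\dots,d_{n-3}$ telescope and, since $d_0 = d_{n-2} = 1$, the boundary weights cancel:
\[
  \prod_{j=1}^{n-2} \kappa(d_{j-1},d_j) = 2^{n-2}\,\frac{d_0}{d_{n-2}} \prod_{j=1}^{n-2} T(d_{j-1},d_j) = 2^{n-2} \prod_{j=1}^{n-2} T(d_{j-1},d_j).
\]
Hence $\Vol(\mathcal{P}_n) = 2^{n-2}\int_{(0,\infty)^{n-3}} \prod_{j=1}^{n-2} T(d_{j-1},d_j)\,\mathrm{d}d_1 \cdots \mathrm{d}d_{n-3}$, the $(n-2)$-step transition density, from $1$ to $1$, of the radial walk. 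Since the radial process $X_k := |e_1 + \cdots + e_k|$ (with $e_i$ i.i.d.\ uniform on $S^2$) is Markov with kernel $T$ and satisfies $X_1 = 1$ almost surely, this is the density of $X_{n-1}$ at $1$, namely $\rho_{n-1}(1)$. Finally, Rayleigh's classical formula for the random flight in $\R^3$, $\rho_m(r) = \tfrac{2r}{\pi}\int_0^\infty (\sin k/k)^m\, k \sin(kr)\, \mathrm{d}k$, collapses at $m = n-1$, $r = 1$ to
\[
  \Vol(\mathcal{P}_n) = 2^{n-2}\,\rho_{n-1}(1) = \frac{2^{n-1}}{\pi}\int_0^\infty \frac{\sin^n x}{x^{n-2}}\, \mathrm{d}x = \frac{2^{n-1}}{2\pi}\int_{-\infty}^\infty \frac{\sin^n x}{x^{n-2}}\, \mathrm{d}x,
\]
the right-hand side of \eqref{eq:volume}. (For $n \ge 4$ the integrand is even and absolutely integrable; $n = 3$ is trivial since then $\mathcal{P}_3$ is a point.)

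\emph{The alternating sum.} Here the plan is the standard passage from the Rayleigh (Fourier) form to the Treloar/Irwin--Hall piecewise-polynomial form. As $\sin x / x$ is the characteristic function of the uniform law on $[-1,1]$, the function $(\sin x/x)^n$ is that of $S_n := U_1 + \cdots + U_n$ with $U_i$ i.i.d.\ uniform on $[-1,1]$, and Fourier inversion (valid for $n \ge 4$) gives
\[
  \frac{1}{2\pi}\int_{-\infty}^\infty \frac{\sin^n x}{x^{n-2}}\, \mathrm{d}x = \frac{1}{2\pi}\int_{-\infty}^\infty x^2\Bigl(\frac{\sin x}{x}\Bigr)^n \mathrm{d}x = -\,f_{S_n}''(0),
\]
where $f_{S_n}$ is the density of $S_n$. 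On the other hand $\tfrac{1}{2}(S_n + n)$ is a sum of $n$ uniform $[0,1]$ variables, whose Irwin--Hall density is the $C^{n-2}$ piecewise polynomial $\tfrac{1}{(n-1)!}\sum_{k \le w}(-1)^k \binom{n}{k}(w-k)^{n-1}$; differentiating twice, rescaling by $\tfrac{1}{8}$, and evaluating at the mean $w = n/2$ yields $f_{S_n}''(0) = \tfrac{1}{2^n (n-3)!}\sum_{k=0}^{\lfloor n/2\rfloor}(-1)^k \binom{n}{k}(n-2k)^{n-3}$ (the term $k = n/2$, present only for even $n$, contributes $0$). Multiplying by $-2^{n-1}$ gives the left-hand side of \eqref{eq:volume}, completing the identification.

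The one step I expect to be genuinely delicate is the telescoping: one must justify writing $\kappa = \tfrac{2a}{b}T$ on the full-measure locus where all $d_j > 0$ (the vanishing set being forced to be null by the triangle inequalities) and then correctly read off the resulting integral of the \emph{probability} kernel $T$ as precisely $\rho_{n-1}(1)$ --- the right number of steps, at the right radius. The remaining pieces --- the unimodular change of variables, Rayleigh's formula, and the Irwin--Hall bookkeeping --- are routine. Alternatively, one could bypass the middle paragraph by combining Theorem~\ref{prop:polygon sampling} with the known symplectic volume of $\PolHat$, at the price of importing more machinery.
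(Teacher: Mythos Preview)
Your argument is correct, but it is genuinely different from the paper's. The paper takes precisely the shortcut you mention in your final sentence: it quotes the known symplectic volume of $\PolHat$ from \cite{Takakura:2002tz,Khoi:2005ch,Mandini:2008wq}, divides by $(2\pi)^{n-3}$ via Theorem~\ref{prop:polygon sampling} to obtain the alternating sum for $\Vol\mathcal{P}_n$, and then converts to the Dirichlet integral using the identity \eqref{eq:dirichlet integral}, which it proves in an appendix by $(n-2p-1)$-fold integration by parts. Your route is the opposite: you compute $\Vol\mathcal{P}_n$ directly by telescoping $\kappa=\tfrac{2a}{b}T$ into the radial transition kernel and invoking Rayleigh's density to reach the integral first, then pass to the alternating sum through the Irwin--Hall density. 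Your approach is more self-contained---it needs neither the cited symplectic volume nor Theorem~\ref{prop:polygon sampling}---and it makes the probabilistic content (that $\Vol\mathcal{P}_n=2^{n-2}\rho_{n-1}(1)$) explicit; the paper's approach is shorter on the page because the heavy lifting is outsourced to the references and the appendix lemma. Both handle the first part (the unimodular change of variables) the same way.
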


\begin{proof}
	The linear transformation from $\mathcal{C}_n$ to $\mathcal{P}_n$ changes from $s$ back to $d$ coordinates; explicitly, $d_j = 1+\sum_{i=1}^j s_i$. The domain of this mapping is the subset of~$\R^{n-3}$ where $-1 \leq \sum s_i \leq 1$, and the range is the affine subspace of $\R^{n-1}$ where $d_0 = d_{n-2} = 1$. A computation reveals that this map preserves $(n-3)$-dimensional volume, so $\Vol \mathcal{C}_n = \Vol \mathcal{P}_n$.
	
	But the volume of $\mathcal{P}_n$ is known! By results of \cite{Takakura:2002tz,Khoi:2005ch,Mandini:2008wq}, the volume of the entire polygon space $\PolHat$ is exactly
\begin{equation}\label{eq:Mandini}
\Vol \PolHat = -\frac{(2\pi)^{n-3}}{2(n-3)!} \sum_{k=0}^{\lfloor \nicefrac{n}{2}\rfloor} (-1)^{k} \binom{n}{k} (n-2k)^{n-3}.
\end{equation}
This volume is the product of the volume $(2\pi)^{n-3}$ of the $(n-3)$-dimensional torus and the volume of $\mathcal{P}_n$. The result follows after dividing \eqref{eq:Mandini} by $(2\pi)^{n-3}$ and using the general Dirichlet-type integral formula:
\begin{equation}\label{eq:dirichlet integral}
  \frac{(-1)^p}{(n-2p-1)!} \sum_{k=0}^{\left\lfloor\nicefrac{n}{2}\right\rfloor} (-1)^k \binom{n}{k} (n-2k)^{n-2p-1} = \frac{2^n}{2\pi}\int_{-\infty}^\infty \frac{\sin^n x}{x^{n-2p}} \dx.
\end{equation}
This equation goes back at least to Edwards~\cite[p.~212]{edwards1922treatise}, who attributes it to Wolstenholme (cf.~\cite[p.~1703]{Borwein:2001bw,Amdeberhan:2007wo,Nathan:2014dn,Weisstein:2003vq}), but we also give a short proof in the appendix.
\end{proof}

We note that the Dirichlet integral formulation above of the volume of the moment polytope appears to be new. 

We now know that sampling $\mathcal{P}_n$ is equivalent to sampling $\mathcal{C}_n$. This leads to our main result, which implies that we can sample $\mathcal{C}_n$ efficiently by rejection sampling the hypercube $[-1,1]^{n-3}$:

\begin{theorem}\label{thm:volume bound}
Let $p_n$ be the probability that a random point in $[-1,1]^{n-3}$ is in $\mathcal{C}_n$. For large $n$,  
\[
	p_n \sim \frac{6 \sqrt{6}}{\sqrt{\pi}}\frac{1}{n^{\nicefrac{3}{2}}}
\]
\end{theorem}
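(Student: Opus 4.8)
The plan is to reduce the claim to the asymptotics of the sinc integral supplied by \prop{moment polytope volume}, and then apply the Laplace method. Since $[-1,1]^{n-3}$ has volume $2^{n-3}$, \prop{moment polytope volume} gives
\[
  p_n = \frac{\Vol\mathcal{C}_n}{2^{n-3}} = \frac{2^{n-1}}{2\pi\cdot 2^{n-3}}\int_{-\infty}^\infty \frac{\sin^n x}{x^{n-2}}\dx = \frac{2}{\pi}\int_{-\infty}^\infty x^2\left(\frac{\sin x}{x}\right)^{\!n}\dx,
\]
so it suffices to prove $\int_{-\infty}^\infty x^2\left(\tfrac{\sin x}{x}\right)^{n}\dx \sim 3\sqrt{6\pi}\,n^{-3/2}$; multiplying by $2/\pi$ then produces the stated constant $\tfrac{6\sqrt 6}{\sqrt\pi}$, since $\tfrac{2}{\pi}\cdot 3\sqrt{6\pi} = \tfrac{6\sqrt6}{\sqrt\pi}$.

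For the asymptotics I would exploit that $(\sin x/x)^n$ concentrates at $x=0$ as $n\to\infty$. The product formula $\tfrac{\sin x}{x} = \prod_{k\ge1}\left(1 - \tfrac{x^2}{k^2\pi^2}\right)$ together with $\log(1-u)\le -u$ gives $\log\tfrac{\sin x}{x}\le -\tfrac{x^2}{6}$ for $0<|x|<\pi$, and with the matching estimate $\log(1-u)\ge -u-u^2$ for small $u$ it gives $\log\tfrac{\sin x}{x}\ge -\tfrac{x^2}{6}-\tfrac{x^4}{90}$ near $0$. I would then split the integral at $|x|=n^{-2/5}$. On the central window $|x|\le n^{-2/5}$ the two bounds force $(\sin x/x)^n = e^{-nx^2/6}\bigl(1+O(n^{-3/5})\bigr)$ uniformly, because $nx^4\le n^{-3/5}\to 0$; hence
\[
  \int_{|x|\le n^{-2/5}} x^2\left(\frac{\sin x}{x}\right)^{\!n}\dx = (1+o(1))\int_{|x|\le n^{-2/5}} x^2 e^{-nx^2/6}\dx = (1+o(1))\int_{-\infty}^\infty x^2 e^{-nx^2/6}\dx,
\]
the last equality because the Gaussian tail beyond $n^{-2/5}$ (far past the natural scale $n^{-1/2}$) is super-polynomially small. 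The substitution $x=\sqrt{6/n}\,t$ evaluates the Gaussian integral as $(6/n)^{3/2}\int_{-\infty}^\infty t^2 e^{-t^2}\,dt = (6/n)^{3/2}\tfrac{\sqrt\pi}{2} = 3\sqrt{6\pi}\,n^{-3/2}$, as required.

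It remains to show the tail $|x|>n^{-2/5}$ is negligible. For $n^{-2/5}<|x|<\pi$ the bound $(\sin x/x)^n\le e^{-nx^2/6}\le e^{-n^{1/5}/6}$ makes this contribution exponentially small; for $|x|\ge\pi$ one has $\bigl|\sin^n x/x^{n-2}\bigr|\le |x|^{-(n-2)}$, which integrates to $\tfrac{2\pi^{-(n-3)}}{n-3}$, again exponentially small. Combining the three ranges completes the proof. The one point requiring care — the ``main obstacle,'' though it is minor — is precisely the handling of the tail near $|x|\approx 1$: the crude bound $\bigl|\sin^n x/x^{n-2}\bigr|\le|x|^{-(n-2)}$ applied on all of $|x|\ge 1$ is useless, since $\int_1^\infty x^{-(n-2)}\,dx = \Theta(1/n)$ is not $o(n^{-3/2})$. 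One must instead cut at $x=\pi$, where $\sin x$ vanishes so that $\pi^{-(n-3)}$ decays exponentially, and dispose of the sliver $n^{-2/5}\le|x|\le\pi$ with the Gaussian bound $e^{-n^{1/5}/6}$; everything else is routine Laplace-method bookkeeping.
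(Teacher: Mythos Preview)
Your proof is correct and follows the same strategy as the paper: both extract the asymptotics of $\int_{-\infty}^\infty x^2\,\sinc^n x\,dx$ by Laplace's method, reducing to the Gaussian integral $\int_{-\infty}^\infty x^2 e^{-nx^2/6}\,dx = 3\sqrt{6\pi}\,n^{-3/2}$. Your version is in fact more carefully justified---supplying the explicit tail bounds (the split at $|x|=n^{-2/5}$ and at $|x|=\pi$) that the paper's proof omits, since the paper simply substitutes $x=y/\sqrt{n}$ and invokes $(1-y^2/(6n)+o(1/n))^n\to e^{-y^2/6}$ pointwise without discussing why the integral converges.
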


\begin{proof}
Making the substitution $x = \nicefrac{y}{\sqrt{n}}$ in~\eqref{eq:volume}, we get
\[
\Vol \mathcal{C}_{n} = \frac{2^{n-1}}{2\pi}\int_{-\infty}^\infty \left( \frac{ \sin\left(\nicefrac{y}{\sqrt{n}}\right)}{\nicefrac{y}{\sqrt{n}}}\right)^n \frac{y^{2}\dy}{n^{\nicefrac{3}{2}}} = \frac{2^{n-1}}{2\pi}\int_{-\infty}^\infty  \sinc^n \left(\nicefrac{y}{\sqrt{n}} \right)  \frac{y^{2}\dy}{n^{\nicefrac{3}{2}}}.
	\]
	Now the Taylor expansion of $\sinc \frac{y}{\sqrt{n}}$ is $1 - \nicefrac{y^2}{6n} +  o(\nicefrac{1}{n})$. But $\lim_{n\rightarrow\infty} \left(1 - \nicefrac{a}{n} + o(\nicefrac{1}{n}) \right)^n = e^{-a}$, so the integral above is asymptotically
	\[
		\frac{2^{n-1}}{2\pi}\frac{1}{n^{\nicefrac{3}{2}}} \int_{-\infty}^\infty e^{-\nicefrac{y^2}{6}} y^{2} \dy = 3 \, \sqrt{\frac{3}{\pi }}
   \, 2^{n-\frac{3}{2}} \, \frac{1}{n^{\nicefrac{3}{2}}},
	\]
	and dividing by the volume of the hypercube gives the result.
\end{proof}

\section{The Action-Angle Method}

We can now state our algorithm for sampling equilateral polygons very simply:
\begin{algorithmic}
\Procedure{ActionAngleSample}{$n$}\Comment{Generate closed equilateral $n$-gon}
\Repeat
	\Repeat
		\State Sample $n-3$ i.i.d.\ step lengths $(s_1,\dots,s_{n-3})$ uniformly from $[-1,1]$ 
	\Until $-1 \leq \sum_{j=1}^{n-3} s_j \leq 1$.
	\State Let $s_{n-2} = -\sum_{j=1}^{n-3} s_j$.
	\State Construct diagonals $d_i = 1 + \sum_{j=1}^i s'_j$, noting $d_0 = 1$, $d_{n-2} = 1$.
\Until $d_{i} + d_{i+1} \geq 1$ for all $i$.
\State Sample $n-3$ i.i.d.\ dihedral angles $\theta_i$ uniformly from $[0,2\pi]$.
\State Reconstruct $P$ from diagonals $d_1, \dots, d_{n-3}$ and dihedrals $\theta_1, \dots, \theta_{n-3}$.
\EndProcedure
\end{algorithmic}

\begin{theorem}
The Action-Angle Method generates uniform random samples of closed, equilateral $n$-gons in expected time $\Theta(n^{5/2})$. 
\end{theorem}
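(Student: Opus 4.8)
The plan is to split the claim into a correctness part (the output has the standard distribution on $\PolHat$) and a complexity part (expected running time $\Theta(n^{5/2})$).

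For correctness, I would observe that the two nested \textbf{Repeat} loops together implement a two-stage rejection sampler. Drawing $(s_1,\dots,s_{n-3})$ i.i.d.\ uniformly on $[-1,1]^{n-3}$ and conditioning on $-1\le\sum_j s_j\le 1$ yields the uniform distribution on the slab $\mathcal{S}_n:=[-1,1]^{n-3}\cap\{-1\le\sum s_j\le1\}$; conditioning that in turn on the inequalities $d_i+d_{i+1}\ge1$ yields the uniform distribution on $\mathcal{C}_n$, because $\mathcal{C}_n\subseteq\mathcal{S}_n$ and, by~\eqref{eq:chordlength excursions}, $\mathcal{C}_n$ is exactly the subset of $\mathcal{S}_n$ cut out by those remaining inequalities. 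The volume-preserving change of coordinates $d_j=1+\sum_{i\le j}s_i$ of Proposition~\ref{prop:moment polytope volume} then makes $\vec d$ uniform on $\mathcal{P}_n$; together with the independent uniform choice of $\vec\theta\in T^{n-3}$ and the measure-preserving reconstruction map $\alpha$ of Theorem~\ref{prop:polygon sampling}, this produces a sample from the standard measure on $\PolHat$.

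For the running time, the two relevant acceptance probabilities are $q_n:=\Vol(\mathcal{S}_n)/2^{n-3}$ for the inner loop and $p_n:=\Vol(\mathcal{C}_n)/2^{n-3}$ overall. Theorem~\ref{thm:volume bound} gives $p_n\sim\frac{6\sqrt6}{\sqrt\pi}\,n^{-3/2}$. The quantity $q_n$ is the probability that a sum of $n-3$ i.i.d.\ uniform $[-1,1]$ variables (each of variance $1/3$) falls in $[-1,1]$, so by a local central limit theorem (equivalently, by analyzing the Irwin--Hall density) $q_n\sim 2/\sqrt{2\pi(n-3)/3}\sim\sqrt{6/(\pi n)}$. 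Hence the inner loop accepts after $1/q_n=\Theta(\sqrt n)$ expected trials, and, conditioned on landing in $\mathcal{S}_n$, the outer loop accepts with probability $p_n/q_n\sim 6/n$, i.e.\ after $\Theta(n)$ expected outer iterations. Since each inner trial costs $\Theta(n)$ (generate and sum $n-3$ numbers), each outer iteration additionally costs $\Theta(n)$ to build the $d_i$ and test the triangle inequalities, and the final reconstruction of $P$ from $\vec d$ and $\vec\theta$ is $\Theta(n)$, Wald's identity (the successive trials are independent) gives expected total work $\Theta(n)\cdot\bigl(\Theta(n^{3/2})+\Theta(n)\bigr)=\Theta(n^{5/2})$.

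The main obstacle is pinning down $q_n$ with enough precision: the conclusion needs the sharp asymptotic $q_n\sim\sqrt{6/(\pi n)}$ rather than merely $q_n=\Theta(n^{-1/2})$, since only then does the conditional outer-loop acceptance probability $p_n/q_n$ have the correct order $\Theta(1/n)$ on both sides (a crude bound on $q_n$ would leave a gap between the upper and lower bounds on the outer acceptance rate). This forces a local --- not merely integral --- central limit estimate, or an explicit asymptotic analysis of the piecewise-polynomial density of $\sum s_j$. A secondary, routine point is justifying the expectation bookkeeping for the nested random loops, which Wald's identity handles once independence of the trials is noted.
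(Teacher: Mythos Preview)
Your argument is correct, and the correctness half matches the paper almost verbatim. For the running time, however, the paper takes a shorter route: it does not separate the two loops at all. The key observation is that, regardless of the nesting, each hypercube draw $(s_1,\dots,s_{n-3})$ is ultimately accepted with probability exactly $p_n=\Vol(\mathcal{C}_n)/2^{n-3}$, so the total number of hypercube draws is geometric with mean $1/p_n\sim\Theta(n^{3/2})$ by Theorem~\ref{thm:volume bound}. Since each draw costs $\Theta(n)$ and the per-outer-iteration overhead (building the $d_i$ and checking the triangle inequalities) is also $\Theta(n)$ and occurs at most once per hypercube draw, the expected work is $\Theta(n)\cdot\Theta(n^{3/2})=\Theta(n^{5/2})$ without ever needing to know $q_n$.

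In particular, the ``main obstacle'' you flag --- getting a sharp local-CLT estimate $q_n\sim\sqrt{6/(\pi n)}$ --- is an artifact of your finer decomposition and is not actually required for the theorem. Your analysis of $q_n$ is correct and does give additional information (namely the separate expected lengths of the inner and outer loops), but the paper's argument shows that the $\Theta(n^{5/2})$ bound follows from $p_n$ alone.
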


\begin{proof}
The fact that the algorithm generates samples according to the correct probability distribution is now easy to check. Theorem~\ref{prop:polygon sampling} tells us that we need only sample the diagonals from the moment polytope $\mathcal{P}_n$ uniformly. Since a linear transformation of the uniform distribution on $\mathcal{P}_n$ is uniform, it suffices to sample $\mathcal{C}_n$ uniformly. We do this by rejection sampling in the loops above. 

By Theorem~\ref{thm:volume bound} we have an acceptance ratio $\sim \frac{6\sqrt{6}}{\sqrt{\pi}}\frac{1}{n^{3/2}}$. Since sampling $n-3$ variates in the innermost loop requires linear time, the total time to produce a sample is $\Theta(n^{5/2})$. The postprocessing steps of generating dihedral angles and reconstructing the polygon are both linear in $n$, so they do not affect the time bound.\end{proof}

\section{Testing the algorithm}
%

We implemented the algorithm in C. Our implementation is incorporated into the freely available \texttt{plCurve} package of Ashton, Cantarella, and Chapman~\cite{plcurve}, as \verb|plc_random_equilateral_closed_polygon|.

To test it, we generated ensembles of polygons and computed sample averages for chordlengths, radius of gyration, and total curvature to compare against theoretical results. All of the sample means were within expected error of the theoretical values. A spot check which can be performed in a few seconds is to take 60,000 random 31- and 32-gons and compute mean total curvature, comparing to the theoretical values  $49.912$ and $51.482$ (rounded). We found $95\%$ confidence intervals of $49.902 \pm 0.0303$ and $51.475 \pm 0.0310$, respectively. 

When comparing various Markov chain algorithms, Alvarado et.\ al.~\cite{Anonymous:2010p2603} tested the number of distinct HOMFLY polynomials (the HOMFLY polynomial is a measure of knot type) observed when taking 10 million samples of equilateral 60-gons. We performed the same test with our direct sampling algorithm to see how the various Markov chain algorithms compared. This ran for about 5 CPU-hours on a laptop. The results are shown in Figure~\ref{fig: datacomp}.

\begin{figure}
\hphantom{.}
\hfill
\raisebox{\height}{
\begin{minipage}{2.4in}
\begin{ruledtabular}
\begin{tabular}{ll}
Algorithm        & distinct HOMFLYs \\ \hline
Polygonal Folds\footnotemark  & 2219 \\
Crankshaft Moves & 6110 \\
Hedgehog Method  & 1111 \\
Triangle Method  & 3505 \\ \hline
Action-Angle Method & $\geq$ 6371 
\end{tabular}
\end{ruledtabular}
\footnotetext[1]{100 million samples, instead of 10 million.}
\end{minipage}
}
\hfill
\includegraphics[width=3in]{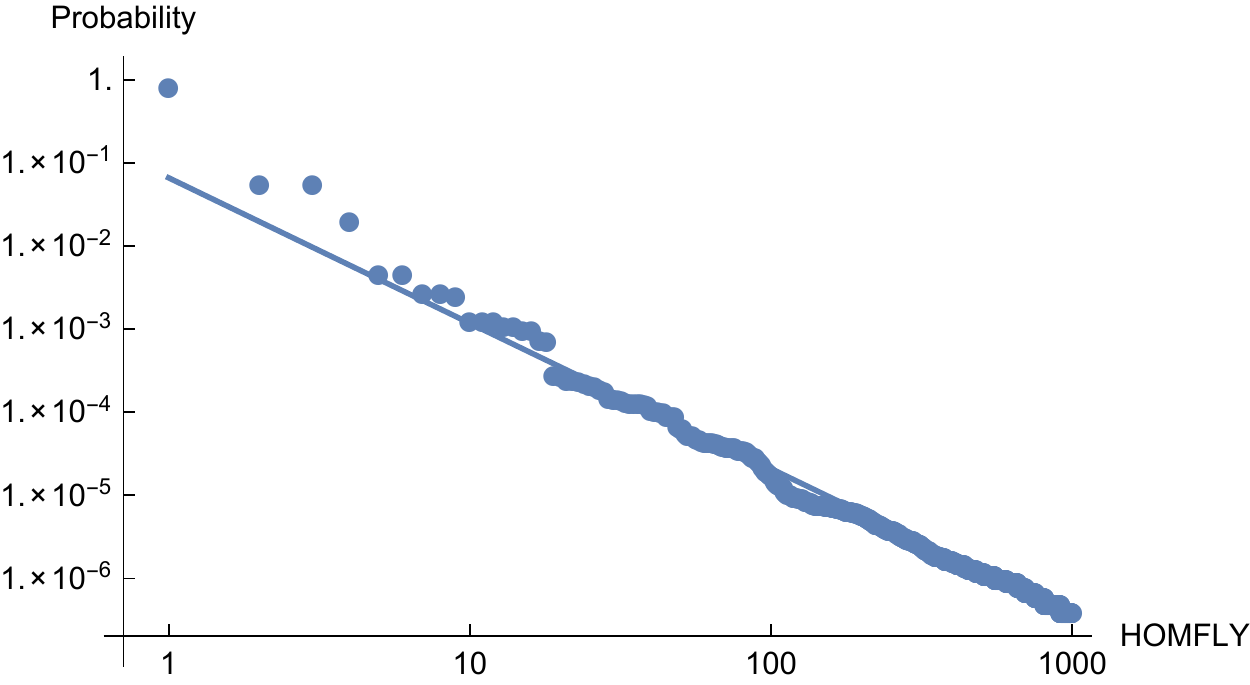}
\hfill

\caption{A comparison of various Markov chain sampling methods to our direct sampling method. The data in the table at left is from~\cite{Anonymous:2010p2603}; they computed the number of different HOMFLY polynomials observed over a sample of 10 million random 60-gons (100 million for the polygonal fold algorithm) generated by Markov chains starting at the regular $n$-gon. We see that these Markov chains do not seem to have converged to the standard probability measure, since direct sampling observes more topological types of polygons. The Action-Angle result is a lower bound since there were 42 polygons generated which were numerically singular. We did not compute a HOMFLY for these. The sizes of the regions of polygon space corresponding to different HOMFLY polynomials are distributed very unevenly. The right hand graph shows a log-log plot of the probability of the $n$-th most frequent HOMFLY polynomial in the sample; the straight line is the graph of $e^{-e} n^{-7/4}$.}
\label{fig: datacomp}
\end{figure}

\section{Future Directions}

In~\cite{Cantarella:2013wl}, two of us gave a modified version of the TSMCMC algorithm for sampling closed equilateral polygons in (rooted) spherical confinement -- where all vertices of the polygon are contained in a ball around the first vertex. The diagonal lengths of these polygons are sampled from a subpolytope of the polytope $\mathcal{P}_n$ of diagonals described above. It would be interesting to see if we could describe this subpolytope well enough to make rejection sampling realistic for these polygons as well.

Another interesting direction to pursue might be the role of the triangulation of the polygon by chords in defining the polytope to sample. Above, we joined each vertex to the first, and let the lengths of these diagonals define the sample polytope $\mathcal{P}_n$. However, the number of possible triangulations of the $n$-gon is given by the $(n-2)$-nd Catalan number, and every one of these triangulations defines a different polytope, each equally valid for sampling. Perhaps another one of these polytopes could be surrounded more tightly by a standard polytope (or even decomposed into simplices explicitly!), improving the efficiency of the method above.

Of course, the most important question in applying these methods to polymer physics is whether they can be extended to deal with spaces of polygons which are even more restricted, such as the space of polygons with excluded volume. But we do not yet understand the geometry of these polygon spaces well enough to fit them into our theory.

\section*{Acknowledgments}

We are grateful to Stephen DeSalvo, whose paper~\cite{DeSalvo:2014wa} and comments substantially influenced the final form of the Action-Angle Method, as well as to Ken Millett, Tetsuo Deguchi, Yuanan Diao, Claus Ernst, and Uta Ziegler for many discussions about the polygon sampling problem. This work was supported by grants from the Simons Foundation (\#354225, Clayton Shonkwiler, and \#284066, Jason Cantarella). We want to thank the Simons Center for Geometry and Physics for hosting the workshop on ``Symplectic and Algebraic Geometry in the Statistical Physics of Polymers'', attended by the authors, where some of the work for this paper was completed.

\appendix
\section{Proof of the Sum Formula for Dirichlet-Type Integrals} \label{appendix}

\begin{lemma}\label{lem:dirichlet integral formula}
	For nonnegative integers $p$ and $n$ with $n \geq 2p+1$,
	\[
		I_{n,p} := \frac{2^n}{2\pi}\int_{-\infty}^\infty \frac{\sin^n x}{x^{n-2p}} \dx = \frac{(-1)^p}{(n-2p-1)!} \sum_{k=0}^{\left\lfloor\nicefrac{n}{2}\right\rfloor} (-1)^k \binom{n}{k} (n-2k)^{n-2p-1}.
	\]
\end{lemma}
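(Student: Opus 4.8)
The plan is to evaluate the integral $I_{n,p}$ by writing $\sin x$ in exponential form and using contour integration (or, equivalently, the Fourier-analytic fact that repeated integration against $x^{-m}$ picks out iterated antiderivatives of a box function). First I would write
\[
\sin^n x = \left(\frac{e^{ix}-e^{-ix}}{2i}\right)^n = \frac{1}{(2i)^n}\sum_{k=0}^{n}(-1)^k\binom{n}{k} e^{i(n-2k)x},
\]
so that $\frac{2^n}{2\pi}\frac{\sin^n x}{x^{n-2p}} = \frac{1}{i^n\,2\pi}\sum_{k=0}^n (-1)^k\binom{n}{k}\frac{e^{i(n-2k)x}}{x^{n-2p}}$. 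The exponents $a_k := n-2k$ run symmetrically through $-n,-n+2,\dots,n$, and the key point is that $I_{n,p}$ is a principal-value integral of a function with a (removable, since $n\ge 2p+1$ and $\sin^n$ vanishes to order $n$) singularity at $0$; the integral converges because $n-(n-2p)=2p\ge 0$ forces enough decay only in the principal-value sense, so some care with the contour near $0$ is needed.

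Next I would compute $\frac{1}{2\pi}\int_{-\infty}^\infty \frac{e^{iax}}{x^{m}}\dx$ (as a principal value, $m=n-2p\ge 1$) by closing the contour in the upper or lower half-plane according to the sign of $a$ and indenting around $x=0$. The residue-type contribution at the origin is, up to normalization, $\frac{(ia)^{m-1}}{(m-1)!}\operatorname{sgn}(a)$ (this is just the statement that $\operatorname{sgn}(a)|a|^{m-1}$ is, up to constants, the inverse Fourier transform of $x^{-m}$, reflecting that integrating the sign/step function $m-1$ times gives a degree-$(m-1)$ piecewise polynomial). Substituting $m=n-2p$ and $a=a_k=n-2k$, collecting the constants $i^{-n}$, $i^{m-1}$, and $\frac{1}{(m-1)!}$, and using that $a_k^{m-1}\operatorname{sgn}(a_k)$ together with the symmetry $k\mapsto n-k$ (which sends $a_k\mapsto -a_k$) lets one fold the sum $\sum_{k=0}^n$ down to $\sum_{k=0}^{\lfloor n/2\rfloor}$, should produce exactly $\frac{(-1)^p}{(n-2p-1)!}\sum_{k=0}^{\lfloor n/2\rfloor}(-1)^k\binom{n}{k}(n-2k)^{n-2p-1}$. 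I would double-check the power of $i$: $i^{-n}\cdot i^{n-2p-1} = i^{-2p-1} = (-1)^{p}\cdot i^{-1}$, and the leftover $i^{-1}$ should cancel against an $i$ coming from the half-residue/orientation factor, leaving the real answer displayed in the lemma; tracking this sign/phase bookkeeping is the step most prone to error.

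The main obstacle will be handling the $k=n/2$ term when $n$ is even (there $a_k=0$, so $e^{i a_k x}/x^m$ contributes nothing and the term drops out, which is consistent with the sum running only to $\lfloor n/2\rfloor$ and the top term being automatically zero) and, more seriously, making the principal-value contour argument rigorous: one must justify that the arcs at infinity vanish (true since after grouping terms the integrand decays like $x^{-2}$ thanks to cancellation among the exponentials, even though individual terms decay only like $x^{-m}$ with possibly $m=1$), and that the small indentation at the origin contributes only the stated finite amount because the full integrand $\sin^n x / x^{n-2p}$ is genuinely nonsingular there. An alternative, perhaps cleaner, route that avoids contours entirely is induction on $p$: integrate by parts in $I_{n,p}$ to relate $\int \sin^n x\, x^{-(n-2p)}\dx$ to a combination of $\int \sin^{n}x\, x^{-(n-2p+2)}\dx$ and $\int \sin^{n-1}x\cos x\, x^{-(n-2p+1)}\dx$, and reduce to the base case $p=0$, where $\frac{2^n}{2\pi}\int \sin^n x\,x^{-n}\dx$ is the classical Dirichlet integral whose value $\frac{1}{(n-1)!}\sum_{k=0}^{\lfloor n/2\rfloor}(-1)^k\binom{n}{k}(n-2k)^{n-1}$ can itself be obtained from the $n$-fold convolution of the indicator of $[-1,1]$ evaluated at $0$. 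I expect to present the Fourier/convolution argument as the main line and mention the induction as a remark.
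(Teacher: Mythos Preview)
Your approach is correct in outline and shares the same starting point as the paper---expanding $\sin^n x$ via the binomial theorem---but the paper takes a shorter and more elementary route that sidesteps exactly the obstacles you flag. Rather than splitting into terms $e^{i(n-2k)x}/x^{n-2p}$ (each individually non-integrable near $0$ when $n-2p\ge 2$, forcing you into principal values, indented contours, and careful bookkeeping of half-residues and phases), the paper first integrates by parts $n-2p-1$ times to reduce the denominator to a single power of $x$:
\[
I_{n,p} = \frac{2^n}{2\pi(n-2p-1)!}\int_{-\infty}^\infty \frac{d^{n-2p-1}}{dx^{n-2p-1}}\bigl(\sin^n x\bigr)\,\frac{dx}{x}.
\]
Only \emph{then} is the binomial expansion inserted and differentiated term by term; after pairing the $k$ and $n-k$ terms, each summand becomes a constant multiple of $\int_{-\infty}^\infty \frac{\sin((n-2k)x)}{x}\,dx = \pi$, the classical Dirichlet integral. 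This delivers the factorial, the $(-1)^p$, and the folded sum in one stroke, with no contour argument and no convergence issues at $0$ or $\infty$ beyond the absolutely standard one. Your Fourier/residue computation would arrive at the same place, but the integration-by-parts reduction is what makes the paper's proof a few lines long; you might present it as the main argument and keep the distributional Fourier-transform interpretation as the remark.
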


\begin{proof}
	Integrating by parts $n-2p-1$ times yields
	\[
		I_{n,p} = \frac{2^n}{2\pi(n-2p-1)!} \int_{-\infty}^\infty \frac{d^{n-2p-1}}{dx^{n-2p-1}} \left( \sin^n x \right) \frac{\dx}{x}.
	\]
	Expanding $\sin^n x=\left(\frac{e^{i x} - e^{-i x}}{2i}\right)^n$ with the binomial theorem, differentiating $n-2p-1$ times, and grouping the $k$ and $n-k$ terms produces
	\[
		I_{n,p} = \frac{(-i)^{2p}}{2\pi(n-2p-1)!} \sum_{k=0}^{\left\lfloor \nicefrac{n}{2}\right\rfloor} (-1)^k \binom{n}{k} (n-2k)^{n-2p-1} \int_{-\infty}^\infty \frac{e^{i(n-2k)x}-e^{-i(n-2k)x}}{i x} \dx.
	\]
	Since the definite integral is just $ \int_{-\infty}^\infty \frac{2\sin((n-2k)x)}{x}\dx = 2 \pi $, the result follows.
\end{proof}

\bibliography{excursions_extra,excursions_papers}

\end{document}